\def\acc#1{\left\{ #1 \right\}}
\theoremstyle{plain}
\newtheorem{theorem}{Theorem}
\newtheorem{lemma}[theorem]{Lemma}
\theoremstyle{definition}
\theoremstyle{remark}
\renewcommand{\le}{\leqslant}
\renewcommand{\ge}{\geqslant}
\title{Avoidability of formulas with two variables}
\author{Pascal Ochem\footnote{LIRMM, CNRS, Universit\'e de Montpellier, France. ochem@lirmm.fr}
and Matthieu Rosenfeld\footnote{LIP, ENS de Lyon, CNRS, UCBL, Universit\'e de Lyon, France. matthieu.rosenfeld@ens-lyon.fr}}
\begin{document}

\maketitle
\setcounter{footnote}{0}
\begin{abstract}
  In combinatorics on words, a word $w$ over an alphabet $\Sigma$ is
  said to avoid a pattern $p$ over an alphabet $\Delta$ of variables
  if there is no factor $f$ of $w$ such that $f=h(p)$ where $h:
  \Delta^*\to\Sigma^*$ is a non-erasing morphism. A pattern $p$ is
  said to be $k$-avoidable if there exists an infinite word over a
  $k$-letter alphabet that avoids $p$.
  We consider the patterns such that at most two variables appear at least twice,
  or equivalently, the formulas with at most two variables.
  For each such formula, we determine whether it is $2$-avoidable, and if it is $2$-avoidable,
  we determine whether it is avoided by exponentially many binary words.

  \bigskip\noindent \textbf{Keywords:} Word; Pattern avoidance.
\end{abstract}

\section{Introduction}\label{sec:intro}

A \emph{pattern} $p$ is a non-empty finite word over an alphabet
$\Delta=\acc{A,B,C,\dots}$ of capital letters called \emph{variables}.
An \emph{occurrence} of $p$ in a word $w$ is a non-erasing morphism $h:\Delta^*\to\Sigma^*$
such that $h(p)$ is a factor of $w$.
The \emph{avoidability index} $\lambda(p)$ of a pattern $p$ is the size of the
smallest alphabet $\Sigma$ such that there exists an infinite word
over $\Sigma$ containing no occurrence of $p$.
Bean, Ehrenfeucht, and McNulty~\cite{BEM79} and Zimin~\cite{Zimin}
characterized unavoidable patterns, i.e., such that $\lambda(p)=\infty$.
We say that a pattern $p$ is \emph{$t$-avoidable} if $\lambda(p)\le t$.
For more informations on pattern avoidability, we refer to Chapter 3 of Lothaire's book~\cite{Lothaire2002}.

A variable that appears only once in a pattern is said to be \emph{isolated}.
Following Cassaigne~\cite{Cassaigne1994}, we associate to a pattern $p$ the \emph{formula} $f$
obtained by replacing every isolated variable in $p$ by a dot.
The factors between the dots are called \emph{fragments}.

An \emph{occurrence} of $f$ in a word $w$ is a non-erasing morphism $h:\Delta^*\to\Sigma^*$
such that the $h$-image of every fragment of $f$ is a factor of $w$.
As for patterns, the avoidability index $\lambda(f)$ of a formula $f$ is the size of the
smallest alphabet allowing an infinite word containing no occurrence of $p$.
Clearly, every word avoiding $f$ also avoids $p$, so $\lambda(p)\le\lambda(f)$.
Recall that an infinite word is \emph{recurrent} if every finite factor appears
infinitely many times.
If there exists an infinite word over $\Sigma$ avoiding $p$,
then there there exists an infinite recurrent word over $\Sigma$ avoiding $p$.
This recurrent word also avoids $f$, so that $\lambda(p)=\lambda(f)$.
Without loss of generality, a formula is such that no variable is isolated
and no fragment is a factor of another fragment.

Cassaigne~\cite{Cassaigne1994} began and Ochem~\cite{Ochem2004}
finished the determination of the avoidability index of every pattern with at most 3 variables.
A \emph{doubled} pattern contains every variable at least twice.
Thus, a doubled pattern is a formula with exactly one fragment.
Every doubled pattern is 3-avoidable~\cite{O16}.
A formula is said to be \emph{binary} if it has at most 2 variables.
In this paper, we determine the avoidability index of every binary formula.

We say that a formula $f$ is \emph{divisible} by a formula $f'$ if $f$ does not avoid $f'$,
that is, there is a non-erasing morphism such that the image of any fragment of $f'$ by $h$ is a factor of a fragment of $f$.
If $f$ is divisible by $f'$, then every word avoiding $f'$ also avoids $f$ and thus $\lambda(f)\le\lambda(f')$.
Moreover, the reverse $f^R$ of a formula $f$ satisfies $\lambda(f^R)=\lambda(f)$.
For example, the fact that $ABA.AABB$ is 2-avoidable implies that $ABAABB$ and $BAB.AABB$ are 2-avoidable.
See Cassaigne~\cite{Cassaigne1994} and Clark~\cite{Clark} for more information on formulas and divisibility.
For convenience, we say that an avoidable formula $f$ is \emph{exponential} (resp. \emph{polynomial})
if the number of words in $\Sigma^n_{\lambda(f)}$ avoiding $f$ is exponential (resp. polynomial) in~$n$.

First, we check that every avoidable binary formula is 3-avoidable.
Since $\lambda(AA)=3$, every formula containing a square is 3-avoidable.
Then, the only square free avoidable binary formula is $ABA.BAB$ with avoidability index 3~\cite{Cassaigne1994}.
Thus, we have to distinguish between avoidable binary formulas with avoidability index 2 and 3.
A binary formula is minimally 2-avoidable if it is 2-avoidable and is not divisible by any other 2-avoidable binary formula.
A binary formula $f$ is maximally 2-unavoidable if it is 2-unavoidable and every other binary formula that is divisible by $f$ is 2-avoidable.

\begin{theorem}\label{main}{\ }

Up to symmetry, the maximally 2-unavoidable binary formulas are:
{\small
\begin{itemize}
\item $AAB.ABA.ABB.BBA.BAB.BAA$
\item $AAB.ABBA$
\item $AAB.BBAB$
\item $AAB.BBAA$
\item $AAB.BABB$
\item $AAB.BABAA$
\item $ABA.ABBA$
\item $AABA.BAAB$
\end{itemize}
}
Up to symmetry, the minimally 2-avoidable binary formulas are:
\begin{itemize}
{\small
\item $AA.ABA.ABBA$ (polynomial)
\item $ABA.AABB$ (polynomial)
\item $AABA.ABB.BBA$ (polynomial)
\item $AA.ABA.BABB$ (exponential)
\item $AA.ABB.BBAB$ (exponential)
\item $AA.ABAB.BB$ (exponential)
\item $AA.ABBA.BAB$ (exponential)
\item $AAB.ABB.BBAA$ (exponential)
\item $AAB.ABBA.BAA$ (exponential)
\item $AABB.ABBA$ (exponential)
\item $ABAB.BABA$ (exponential)
\item $AABA.BABA$ (exponential)
\item $AAA$ (exponential)
\item $ABA.BAAB.BAB$ (exponential)
\item $AABA.ABAA.BAB$ (exponential)
\item $AABA.ABAA.BAAB$ (exponential)
\item $ABAAB$ (exponential)
}
\end{itemize}
\end{theorem}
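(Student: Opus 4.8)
The plan is to split the statement into three tasks and then assemble them. Task~1: prove that each of the eight formulas in the first list is $2$-unavoidable. Task~2: prove that each of the seventeen formulas in the second list is $2$-avoidable and is avoided by polynomially, respectively exponentially, many binary words, as indicated. Task~3: prove completeness, namely that every binary formula either is divisible by one of the seventeen formulas in the second list --- and is therefore $2$-avoidable --- or, up to reversal and exchange of the two variables, divides one of the eight formulas in the first list --- and is therefore $2$-unavoidable. Given Tasks~1--3, the maximality of the first list and the minimality of the second list follow by checking the finitely many divisibility relations among the listed formulas; throughout we pass freely between a pattern and its associated formula since $\lambda(p)=\lambda(f)$, and work up to reversal since $\lambda(f^R)=\lambda(f)$.

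For Task~3 I would first reduce to finitely many formulas. A binary formula has fragments that are words over $\{A,B\}$, no isolated variable, and no fragment a factor of another; the last condition bounds the number of fragments once their lengths are bounded. The engine is a family of divisibility lemmas of the form ``a binary formula with a long enough fragment, or with a fragment of a prescribed shape, is divisible by one of the small $2$-avoidable formulas $AAA$, $ABAAB$, $ABAB.BABA$, $AABA.BABA$, $AABB.ABBA$, $ABA.AABB$, \dots'' Together with the two facts recalled before the theorem --- every formula containing a square is $3$-avoidable, and $ABA.BAB$ is the only avoidable square-free binary formula --- these lemmas leave an explicit finite list of candidate formulas. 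I would enumerate the candidates by the multiset of fragment lengths and treat each class in turn, assigning to every candidate either a divisor in the second list or a maximally unavoidable formula that it divides.

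Tasks~1 and~2 are then a finite, largely computational verification. For $2$-unavoidability (Task~1) one runs, for each of the eight formulas $f$, a backtracking search over binary words: if the tree of $f$-avoiding binary words is finite, then a compactness argument shows there is no infinite $f$-avoiding binary word. For Task~2 I would handle each exponential formula $f$ by exhibiting a uniform morphism from a known exponential language --- for instance the ternary square-free words, of which there are exponentially many of each length --- to the binary words, such that the image of every square-free word avoids $f$; since such a morphism is injective, this produces exponentially many $f$-avoiding binary words of each length, whence in particular an infinite $f$-avoiding binary word exists. For each polynomial formula I would instead establish a rigid structural description of all $f$-avoiding binary words, simultaneously producing an infinite avoiding word and the claimed polynomial bound. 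In the morphism-based constructions, the statement ``the image of every square-free word avoids $f$'' is verified by exploiting synchronization properties of the morphism to reduce it to a bounded, machine-checkable case analysis.

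The main obstacle is twofold. First, Task~3 is a large and delicate bookkeeping problem: one must be certain that the divisibility lemmas genuinely cover every binary formula outside the two lists, which requires an exhaustive and carefully structured case analysis on fragment lengths and shapes, realistically carried out with computer assistance. Second, the polynomial cases of Task~2 are the subtle part: an exponential \emph{upper} bound on the number of avoiding words is automatic, but proving a \emph{polynomial} upper bound forces one to pin down the exact structure of the avoiding words, and this structural analysis --- together with cleanly separating the polynomial formulas from superficially similar exponential ones, for instance $AA.ABA.ABBA$ versus $AA.ABA.BABB$ --- is where most of the genuine work lies.
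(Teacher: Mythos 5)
Your plan coincides with the paper's proof in all essentials: backtracking establishes the $2$-unavoidability of the eight maximal formulas (the paper records the resulting maximal lengths in Figure~\ref{tabularunavoidable}), synchronizing uniform morphisms applied to ternary square-free words give the exponential cases, a rigid structural characterization of the avoiding words (realized in the paper as ``essential avoidance'' by finitely many morphic images of $b_3$, resting on Lemma~\ref{lm}) gives the polynomial cases, and completeness reduces to a finite divisibility check. You also correctly locate the genuine difficulty in the polynomial upper bounds, which is exactly where the paper invests its technical lemma on $b_3$.
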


Given a binary formula $f$, we can use Theorem~\ref{main} to find $\lambda(f)$.
Now, we also consider the problem whether an avoidable binary formula is polynomial or exponential.
If $\lambda(f)=3$, then either $f$ contains a square or $f=ABA.BAB$, so that $f$ is exponential.
Thus, we consider only the case $\lambda(f)=2$.
If $f$ is divisible by an exponential $2$-avoidable formula given in Theorem~\ref{main}, then $f$ is known to be exponential.
This leaves open the case such that $f$ is only divisible by polynomial $2$-avoidable formulas.
The next result settles every open case.

\begin{theorem}\label{second}{\ }

The following formulas are polynomial:
{\small
\begin{itemize}
\item $BBA.ABA.AABB$
\item $AABA.AABB$
\end{itemize}
}
The following formulas are exponential:
\begin{itemize}
{\small
\item $BAB.ABA.AABB$
\item $AAB.ABA.ABBA$
\item $BAA.ABA.AABB$
\item $BBA.AABA.AABB$
}
\end{itemize}
\end{theorem}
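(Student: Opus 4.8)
The plan is to treat the polynomial and the exponential parts separately; in both cases the avoidability index is already known to be $2$, since each of the six formulas is divisible by one of the polynomial $2$-avoidable formulas of Theorem~\ref{main} (for instance $AAB.ABA.ABBA$ is divisible by $AA.ABA.ABBA$, and the other five are divisible by $ABA.AABB$, e.g.\ $AABA$ contains $ABA$ as a factor and $AABB$ is a fragment). Because each formula is divisible by a polynomial $2$-avoidable formula, its set of avoiding binary words is a \emph{superset} of a polynomially-growing language, and all that remains is to decide, for each formula, whether this superset is still polynomial or has become exponential. The bound $2^n$ is trivial, so for $BBA.ABA.AABB$ and $AABA.AABB$ we must prove an $O(n^{c})$ upper bound on the number of binary words of length $n$ avoiding them, and for the four remaining formulas a lower bound $c^{n}$ with $c>1$.

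For the two polynomial formulas the plan is to give an explicit structural description of the binary words avoiding them. I would first enumerate by computer all such words up to some length $N$, and observe that past a bounded prefix each avoider is determined up to boundedly many ``free choices'': concretely, that the language of avoiders is contained in a finite union of sets of the form $u_0\{v_1\}^{*}u_1\cdots\{v_k\}^{*}u_k$ with $k$ bounded, or equivalently that every avoider is a factor of one of finitely many eventually periodic (more generally, low-complexity morphic) infinite words. This description would then be proved by induction on the length: a long enough avoider admits at each position a forced one-letter right extension except at the $k$ designated spots, the point being that any other extension creates an occurrence of the formula for some morphism $h$ (not merely $A\mapsto a,\ B\mapsto b$), which is a finite verification. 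The factor-complexity function of such a family is polynomial in $n$, which gives the bound.

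For each exponential formula $f$ the plan is the standard morphic construction. I would search for a non-erasing morphism $h$ from a reference language $L$ of exponential growth --- natural candidates being the ternary square-free words, the binary cube-free words, or the binary $(7/3)^{+}$-power-free words --- such that $h(w)$ avoids $f$ for every $w\in L$, with $h$ chosen to be a code, e.g.\ a uniform morphism with pairwise distinct images, hence injective. The inclusion $h(L)\subseteq\{w:\ w\text{ avoids }f\}$ is then reduced, by the usual synchronization argument, to a finite check: an occurrence of $f$ in $h(w)$ forces, after aligning the images of the fragments of $f$ with the block decomposition induced by $h$, a repetition in $w$ of the kind excluded from $L$, and it suffices to inspect the images of all words of $L$ of length at most some computable bound. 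Injectivity of $h$ then yields at least $|L\cap\Sigma_k^{\lfloor n/q\rfloor}|\ge c^{n}$ binary words of length $n$ avoiding $f$, where $q$ is the length of $h$.

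The exponential direction is essentially routine once a working morphism has been found: the verification is a bounded mechanical check and distinct images immediately give the counting. The main obstacle is the polynomial direction. The difficulty is to pin down exactly the finitely many free spots and the periodic patterns that occur --- the case analysis must handle all morphisms that could realise the formula and all interactions near those spots --- and, more importantly, to certify that the induction really stabilises, i.e.\ that no new behaviour appears beyond length $N$; it is here that computer assistance is indispensable. Were the superset to have exponential complexity instead, the dichotomy would flip, so this counting argument is the crux of Theorem~\ref{second}.
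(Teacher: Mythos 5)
Your exponential half is essentially the paper's argument: for each of $BAB.ABA.AABB$, $AAB.ABA.ABBA$, $BAA.ABA.AABB$, $BBA.AABA.AABB$ the paper exhibits an explicit uniform synchronizing morphism $g:\Sigma_3^*\to\Sigma_2^*$ whose images of ternary square-free words avoid the formula (together with $SQ_t$), verifies this by the synchronization-plus-bounded-check argument you describe (all four formulas are ``easy'' in the paper's sense, every variable lying in a square, so the finite check suffices), and counts via the exponential growth of ternary square-free words. The real content there is producing the morphisms (up to $304$-uniform), but the method is the one you outline.

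The polynomial half has a genuine gap. Your proposed structure for the avoiders of $BBA.ABA.AABB$ and $AABA.AABB$ --- a finite union of languages $u_0\{v_1\}^{*}u_1\cdots\{v_k\}^{*}u_k$, i.e.\ factors of finitely many eventually periodic words, certified by an induction in which all but boundedly many positions have a forced one-letter extension --- does not match what actually happens and the induction would not stabilise. The avoiding words are (up to exchanging \texttt{0} and \texttt{1} and taking factors) exactly $g_x(b_3)$ and $g_t(b_3)$, morphic images of the \emph{aperiodic} ternary Thue word $b_3$; these have linear factor complexity, so there are right-special factors of every length and infinitely many positions with two legal extensions. No finite computation of the form ``beyond length $N$ the extension is forced except at $k$ spots'' can close the argument. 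What the paper does instead is: (i) split the avoiders into finitely many types by a short backtracking computation (e.g.\ an avoider of $AABA.AABB$ contains exactly one of \texttt{0010}, \texttt{00110}); (ii) use Cassaigne's circular-morphism criterion to lift each type to a ternary bi-infinite pre-image $w_3$ under $g_x$ or $g_t$; and (iii) --- the key step your proposal has no substitute for --- invoke a strengthening of Thue's theorem (Lemma~\ref{lm}: $b_3$ essentially avoids \texttt{010}, \texttt{212}, squares $XX$ with $1\le|X|\le3$, and $\texttt{2}YY$ with $|Y|\ge4$) to force $w_3\sim b_3$, since otherwise $w_3$ contains some $p\texttt{2}YYs$ whose $g$-image contains an occurrence of $AABA.AABBA$ and hence of the target formula. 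The authors state explicitly that plain Thue ($b_3$ essentially avoids \texttt{010}, \texttt{212} and squares) was not enough; identifying and proving this stronger property of $b_3$ is the crux that your proposal is missing.
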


To obtain the 2-unavoidability of the formulas in the first part of Theorem~\ref{main},
we use a standard backtracking algorithm.
Figure~\ref{tabularunavoidable} gives the maximal length and number of binary words avoiding each maximally 2-unavoidable formula.

\begin{figure}[htbp]
{
\small
\begin{tabular}{|c|c|c|}
\hline
 & Maximal length of a & Number of binary\\
Formula & binary word avoiding & words avoiding\\
 & this formula & this formula\\
\hline
 $AAB.BBAA$& 22 & 1428\\
\hline
 $AAB.ABA.ABB.BBA.BAB.BAA$& 23 & 810\\
\hline
 $AAB.BBAB$& 23 & 1662\\
\hline
 $AABA.BAAB$& 26 & 2124\\
\hline
 $AAB.ABBA$& 30 & 1684\\
\hline
$AAB.BABAA$&42 & 71002\\
\hline
$AAB.BABB$& 69 & 9252\\
\hline
$ABA.ABBA$ & 90 & 31572 \\
\hline
\end{tabular}
}
\caption{The number and maximal length of binary words avoiding the maximally 2-unavoidable formulas.}
\label{tabularunavoidable}
\end{figure}

In Section~\ref{sec:poly}, we consider the polynomial formulas in Theorems~\ref{main} and~\ref{second}.
The proof uses a technical lemma given in Section~\ref{sec:lemma}.
Then we consider in Section~\ref{sec:exp} the exponential formulas in Theorems~\ref{main} and~\ref{second}.

A preliminary version of this paper, without Theorem~\ref{second}, has been presented at DLT 2016.

\section{The useful lemma}\label{sec:lemma}

Let us define the following words:
\begin{itemize}
\item$b_2$ is the fixed point of $\texttt{0}\mapsto\texttt{01}$, $\texttt{1}\mapsto\texttt{10}$.
\item$b_3$ is the fixed point of $\texttt{0}\mapsto\texttt{012}$, $\texttt{1}\mapsto\texttt{02}$, $\texttt{2}\mapsto\texttt{1}$.
\item$b_4$ is the fixed point of $\texttt{0}\mapsto\texttt{01}$, $\texttt{1}\mapsto\texttt{03}$, $\texttt{2}\mapsto\texttt{21}$, $\texttt{3}\mapsto\texttt{23}$.
\item$b_5$ is the fixed point of $\texttt{0}\mapsto\texttt{01}$, $\texttt{1}\mapsto\texttt{23}$, $\texttt{2}\mapsto\texttt{4}$, $\texttt{3}\mapsto\texttt{21}$, $\texttt{4}\mapsto\texttt{0}$.
\end{itemize}

Let $w$ and $w'$ be infinite (right infinite or bi-infinite) words. We say that $w$ and $w'$ are equivalent if they have the same set of finite factors.
We write $w\sim w'$ if $w$ and $w'$ are equivalent.
A famous result of Thue~\cite{Thue06} can be stated as follows:

\begin{theorem}~\cite{Thue06}
\label{thm:thue}
Every bi-infinite ternary word avoiding \texttt{010}, \texttt{212}, and squares is equivalent to $b_3$.
\end{theorem}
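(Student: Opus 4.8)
The plan is to prove that $b_3$ is, up to equivalence, the unique such word by showing that every bi-infinite ternary word avoiding \texttt{010}, \texttt{212}, and squares can be \emph{uniquely desubstituted} through the morphism $\sigma$ defining $b_3$ (so $\sigma\colon\texttt{0}\mapsto\texttt{012}$, $\texttt{1}\mapsto\texttt{02}$, $\texttt{2}\mapsto\texttt{1}$ and $\sigma(b_3)=b_3$), and then iterating this operation. That $b_3$ itself avoids \texttt{010}, \texttt{212}, and squares is classical (and easily verified directly on $\sigma$).

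So let $w$ be a bi-infinite ternary word avoiding \texttt{010}, \texttt{212}, and squares. First I would pin down the forced local structure. Square-freeness excludes the factors \texttt{00}, \texttt{11}, \texttt{22}; combined with the absence of \texttt{010} and \texttt{212}, this forces every occurrence of \texttt{1} in $w$ to sit in context \texttt{012} or in context \texttt{210} (if the letter before a \texttt{1} is \texttt{0} then the letter after it cannot be \texttt{0}, hence is \texttt{2}; symmetrically when the letter before is \texttt{2}). Call a position of $w$ a \emph{break} if it carries a \texttt{0}, or if it carries a \texttt{1} occurring in context \texttt{210}. A short case analysis on the letter just after a break --- using only \texttt{00}, \texttt{11}, \texttt{22}, \texttt{010}, \texttt{212} as forbidden factors --- shows that two consecutive breaks are at distance $1$, $2$, or $3$, the block between them being \texttt{1}, \texttt{02}, or \texttt{012} respectively. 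Since a bi-infinite square-free ternary word contains the letter \texttt{0} infinitely often in both directions (otherwise $w$ has a one-sided infinite factor over a binary alphabet, which necessarily contains a square), breaks occur arbitrarily far on both sides, so these blocks partition $w$; decoding \texttt{012}, \texttt{02}, \texttt{1} back to \texttt{0}, \texttt{1}, \texttt{2} we obtain $w=\sigma(w')$ for a bi-infinite ternary word $w'$, and $w'$ is uniquely determined because the block decomposition is forced.

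Next I would check that $w'$ again avoids \texttt{010}, \texttt{212}, and squares, so that $w\mapsto w'$ keeps us inside the class. A square in $w'$ maps to a square in $w=\sigma(w')$. If $w'$ contained \texttt{010}, then $w$ would contain $\sigma(\texttt{010})=\texttt{01202012}$, which has the square \texttt{2020}. If $w'$ contained \texttt{212}, then, $w'$ being square-free, that occurrence would extend to \texttt{02120}, and $w$ would contain $\sigma(\texttt{02120})=\texttt{0121021012}$, which has the square \texttt{210210}. All of these are impossible, so $w$ can be desubstituted repeatedly: $w=\sigma^{n}(w_n)$ for every $n\ge 1$, with each $w_n$ a bi-infinite ternary word avoiding \texttt{010}, \texttt{212}, and squares.

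Finally I would compare factor sets with $b_3=\sigma^{\infty}(\texttt{0})$. Because $\min_{a}\abs{\sigma^{n}(a)}\to\infty$, a given factor of $w=\sigma^{n}(w_n)$ lies, once $n$ is large enough, inside $\sigma^{n}(v)$ for a factor $v$ of $w_n$ of length at most $2$; such a $v$ is square-free and hence a factor of $b_3$ (the square-free ternary words of length at most $2$ all occur in $b_3$), so $\sigma^{n}(v)$ is a factor of $\sigma^{n}(b_3)=b_3$, and therefore every factor of $w$ is a factor of $b_3$. Conversely, each $w_n$ contains the letter \texttt{0} (a bi-infinite square-free ternary word cannot omit a letter), so $w=\sigma^{n}(w_n)$ contains the prefix $\sigma^{n}(\texttt{0})$ of $b_3$, and $\abs{\sigma^{n}(\texttt{0})}\to\infty$, so every factor of $b_3$ occurs in $w$. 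Hence $w\sim b_3$. I expect the desubstitution step to be the crux: proving that the block decomposition is genuinely forced (the observation that every \texttt{1} lies in context \texttt{012} or \texttt{210}, together with the case analysis at each break), and in particular noticing that for \texttt{212} one must use the extension \texttt{02120}, since $\sigma(\texttt{212})$ alone is square-free.
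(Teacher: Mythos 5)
The paper does not prove this statement at all: it is Thue's 1906 theorem, quoted with a citation to \cite{Thue06} and used as a black box (the paper's own contribution starts with the strengthening in Lemma~\ref{lm}). So there is nothing to compare against; what matters is whether your self-contained argument is sound, and it is. The desubstitution route you take is the classical one, and the key steps all check out: every \texttt{1} is forced into context \texttt{012} or \texttt{210}; the block decomposition at your ``breaks'' yields exactly the blocks $\sigma(\texttt{0})=\texttt{012}$, $\sigma(\texttt{1})=\texttt{02}$, $\sigma(\texttt{2})=\texttt{1}$; the letter \texttt{0} occurs unboundedly in both directions because a bi-infinite binary tail would contain a square; $\sigma(\texttt{010})=\texttt{01202012}$ contains the square \texttt{2020}; and you correctly noticed the one trap, namely that $\sigma(\texttt{212})=\texttt{1021}$ is square-free, so one must first extend \texttt{212} to \texttt{02120} (forced by square-freeness of $w'$, since \texttt{1212} and \texttt{2121} are squares) and observe that $\sigma(\texttt{02120})=\texttt{0121021012}$ contains \texttt{210210}. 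The final factor-set comparison is also fine: since $\min_a\abs{\sigma^n(a)}\to\infty$, every factor of $w$ sits inside $\sigma^n(v)$ for some square-free $v$ of length at most $2$, all of which occur in $b_3$, and conversely $w$ contains the prefixes $\sigma^n(\texttt{0})$ of $b_3$. The only cosmetic remark is that uniqueness of the desubstitution, which you emphasize, is not actually needed --- existence of one valid preimage $w'$ in the same avoidance class suffices for the iteration.
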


Given an alphabet $\Sigma$ and forbidden structures $S$, we say that a finite set $W$ of infinite words over $\Sigma$
\emph{essentially avoids} $S$ if every word in $W$ avoids $S$ and every bi-infinite words over $\Sigma$ avoiding $S$
is equivalent to one of the words in $S$. If $W$ contains only one word $w$, we denote the set $W$ by $w$ instead of $\acc{w}$.
Then we can restate Theorem~\ref{thm:thue}: $b_3$ essentially avoids \texttt{010}, \texttt{212}, and squares

The results in the next section involve $b_3$.
We have tried without success to prove them by using Theorem~\ref{thm:thue}.
We need the following stronger property of $b_3$:
\begin{lemma}
\label{lm}
$b_3$ essentially avoids \texttt{010}, \texttt{212}, $XX$ with $1\le|X|\le3$, and $\texttt{2}YY$ with $|Y|\ge4$.
\end{lemma}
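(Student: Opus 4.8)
The plan is to build on Theorem~\ref{thm:thue} rather than reprove everything from scratch. By that theorem, any bi-infinite ternary word $w$ avoiding \texttt{010}, \texttt{212}, and squares is equivalent to $b_3$, so it suffices to establish two things: first, that $b_3$ itself has the claimed stronger avoidance property, and second, that any bi-infinite word avoiding the stronger list of structures in particular avoids \texttt{010}, \texttt{212}, and squares (the latter is immediate since squares $XX$ with $|X|$ arbitrary are forbidden once we know $b_3$ has no long squares, but we must be slightly careful: the lemma only forbids $XX$ with $1\le|X|\le 3$ and the special form $\texttt{2}YY$, not all squares). So the real content is: (a) $b_3$ avoids $XX$ for $1\le|X|\le 3$ and avoids $\texttt{2}YY$ for $|Y|\ge 4$; and (b) $b_3$ in fact avoids \emph{all} squares, so that the set of bi-infinite words avoiding the weaker list and the stronger list coincide, letting us invoke Theorem~\ref{thm:thue}.

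First I would recall (or cite Thue) that $b_3$ is square-free; combined with the definition of $b_3$ as the fixed point of $\texttt{0}\mapsto\texttt{012}$, $\texttt{1}\mapsto\texttt{02}$, $\texttt{2}\mapsto\texttt{1}$, this already handles the $XX$ clause for all lengths, and in particular for $1\le|X|\le3$, and it also shows $b_3$ avoids \texttt{010} and \texttt{212} (otherwise $b_300$ or $b_311$ type factors... actually \texttt{010} and \texttt{212} are not squares, so this needs the separate classical fact, which is part of Theorem~\ref{thm:thue}'s hypothesis being satisfiable, i.e. $b_3$ realizes it). The one genuinely new clause is $\texttt{2}YY$ with $|Y|\ge 4$: since $b_3$ is square-free it has no factor $YY$ at all, so \emph{a fortiori} no factor $\texttt{2}YY$. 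Hence every clause of the lemma is satisfied by $b_3$, and the ``essentially avoids'' claim reduces to showing that every bi-infinite ternary word $w$ avoiding the stronger list is equivalent to $b_3$.

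For that direction, suppose $w$ is bi-infinite and avoids \texttt{010}, \texttt{212}, $XX$ with $1\le|X|\le 3$, and $\texttt{2}YY$ with $|Y|\ge 4$. I would argue that $w$ must in fact be square-free, and then apply Theorem~\ref{thm:thue}. So suppose for contradiction $w$ contains a square $YY$ with $|Y|\ge 4$ (squares of period $\le 3$ are directly forbidden). Consider the letter immediately preceding this occurrence of $YY$ in $w$ (it exists since $w$ is bi-infinite). If that letter is \texttt{2}, we get a forbidden factor $\texttt{2}YY$, contradiction. So the letter preceding $YY$ is \texttt{0} or \texttt{1}. The key step is then a combinatorial analysis of short square-free-ish ternary words avoiding \texttt{010} and \texttt{212}: I would show that in any ternary word avoiding \texttt{010}, \texttt{212}, and squares of period $\le 3$, every factor of length, say, $4$ or $5$ is followed/preceded in only very restricted ways, and in particular a factor $aYY$ with $a\in\{\texttt{0},\texttt{1}\}$ and $|Y|\ge 4$ forces a short forbidden pattern (either a square of small period, or \texttt{010}, or \texttt{212}) to appear inside $aY$ or at the junction. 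This is the step I expect to be the main obstacle, and it is presumably where a short finite case check (by hand or by the backtracking search mentioned in the paper) is invoked: one enumerates the finitely many ternary words of bounded length avoiding \texttt{010}, \texttt{212}, and squares of period $\le 3$, and checks that none of them extends to a factor of the form $aYY$ with $a\in\{\texttt{0},\texttt{1}\}$, $4\le|Y|\le L$ for a suitable bound $L$, while the case $|Y|>L$ is handled by noting that $YY$ would already contain a short square or one of \texttt{010}, \texttt{212} as a factor by a pigeonhole/standard argument on square-free ternary words.

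Concluding: once we know $w$ is square-free, Theorem~\ref{thm:thue} gives $w\sim b_3$, which is exactly the ``essentially avoids'' conclusion. I would present the proof in two labelled parts --- ``$b_3$ avoids the structures'' (one line, from square-freeness of $b_3$ plus the classical fact that $b_3$ avoids \texttt{010} and \texttt{212}) and ``uniqueness up to equivalence'' (the contradiction argument above, reducing to Theorem~\ref{thm:thue}) --- with the finite verification stated explicitly as a claim and its routine check delegated to a computer search, consistent with the methodology already used elsewhere in the paper.
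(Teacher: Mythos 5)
Your overall strategy coincides with the paper's: show that a bi-infinite word avoiding the listed structures cannot contain any square at all, and then invoke Theorem~\ref{thm:thue}. The first half (that $b_3$ itself satisfies the stronger avoidance, via square-freeness) is fine. The gap is exactly at the step you flag as ``the main obstacle'', and the way you propose to close it does not work. You suggest a finite check for $4\le|Y|\le L$ together with a pigeonhole argument for $|Y|>L$, claiming that a long square $YY$ would already contain a short square or \texttt{010} or \texttt{212}. That claim is false: take $Y$ to be a long factor of $b_3$ whose last letter differs from its first and whose junction creates no short square (already $Y=\texttt{0121}$ gives $YY=\texttt{01210121}$ with no square of period at most $3$ and no \texttt{010}, \texttt{212}; arbitrarily long such $Y$ exist). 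So the obstruction to a long square is not local to the square and cannot be reduced to a bounded window around it, which means no finite value of $L$ suffices and no pigeonhole on $Y$ alone rescues the argument.

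What is actually needed --- and what the paper does --- is a symbolic, length-independent case analysis on a hypothetical large square $MM$: one repeatedly peels letters off the ends of $M$ and extends the occurrence to the left \emph{and} to the right, using the full list $F=\acc{\texttt{00},\texttt{11},\texttt{22},\texttt{010},\texttt{212},\texttt{0202},\texttt{2020},\texttt{1021},\texttt{1201}}$ of forbidden factors of $b_3$ of length at most $4$, until one either hits a factor in $F$ or manufactures a large factor $\texttt{2}YY$ (possibly located elsewhere than the original square, e.g.\ after forcing a right extension such as $\texttt{102}\mapsto\texttt{102012}$). Two points are missing from your write-up even as a plan: (i) the factors $\texttt{0202}$, $\texttt{2020}$, $\texttt{1021}$, $\texttt{1201}$ are not among the structures forbidden by the lemma, so before using them one must first establish (the paper does this by a computer check on factors of length $100$) that any bi-infinite word avoiding the lemma's structures also avoids them; and (ii) the three cases according to the first letter of $M$ are not independent --- the cases $M=\texttt{1}N$ and $M=\texttt{2}N$ each reduce in a subcase to the case $M=\texttt{0}N$, so the analysis has to be organized with that reduction in mind. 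As written, your proposal identifies the right target but leaves the only nontrivial part of the proof unestablished, and the specific mechanism you offer for it would fail.
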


\begin{proof}
We start by checking by computer that $b_3$ has the same set of factors of length 100 as
every bi-infinite ternary word avoiding \texttt{010}, \texttt{212}, $XX$ with $1\le|X|\le3$, and $\texttt{2}YY$ with $|Y|\ge4$.
The set of the forbidden factors of $b_3$ of length at most 4 is
$F=\acc{\texttt{00},\texttt{11},\texttt{22},\texttt{010},\texttt{212},\texttt{0202},\texttt{2020},\texttt{1021},\texttt{1201}}$.
To finish the proof, we use Theorem~\ref{thm:thue} and we suppose for contradiction that $w$ is a bi-infinite ternary word that
contains a large square $MM$ and avoids both $F$ and large factors of the form $\texttt{2}YY$.

\begin{itemize}
 \item Case $M=\texttt{0}N$. Then $w$ contains $MM=\texttt{0}N\texttt{0}N$. Since $\texttt{00}\in F$ and $\texttt{2}YY$ is forbidden, $w$ contains $\texttt{10}N\texttt{0}N$.
 Since $\acc{\texttt{11},\texttt{010}}\subset F$, $w$ contains $210N0N$.
 If $N=P\texttt{1}$, then $w$ contains $\texttt{210}P\texttt{10}P\texttt{1}$, which contains $\texttt{2}YY$ with $Y=10P$.
 So $N=P\texttt{2}$ and $w$ contains $\texttt{210}P\texttt{20}P\texttt{2}$.
 If $P=Q\texttt{1}$, then $w$ contains $\texttt{210}Q\texttt{120}Q\texttt{12}$.
 Since $\acc{\texttt{11},\texttt{212}}\subset F$, the factor $Q\texttt{12}$ implies that $Q=R\texttt{0}$ and $w$ contains $\texttt{210}R\texttt{0120}R\texttt{012}$.
 Moreover, since $\acc{\texttt{00},\texttt{1201}}\subset F$, the factor $\texttt{120}R$ implies that $R=\texttt{2}S$ and $w$ contains $\texttt{2102}S\texttt{01202}S\texttt{012}$.
 Then there is no possible prefix letter for $S$: \texttt{0} gives \texttt{2020}, \texttt{1} gives \texttt{1021}, and \texttt{2} gives \texttt{22}.
 This rules out the case $P=Q\texttt{1}$.
 So $P=Q\texttt{0}$ and $w$ contains $\texttt{210}Q\texttt{020}Q\texttt{02}$.
 The factor $Q\texttt{020}Q$ implies that $Q=1R1$, so that $w$ contains $\texttt{2101}R\texttt{10201}R\texttt{102}$.
 Since $\acc{\texttt{11},\texttt{010}}\subset F$, the factor $\texttt{01}R$ implies that $R=\texttt{2}S$, so that $w$ contains $\texttt{21012}S\texttt{102012}S\texttt{102}$.
 The only possible right extension with respect to $F$ of \texttt{102} is \texttt{102012}.
 So $w$ contains $\texttt{21012}S\texttt{102012}S\texttt{102012}$, which contains $\texttt{2}YY$ with $Y=S\texttt{102012}$.
 \item Case $M=\texttt{1}N$. Then $w$ contains $MM=\texttt{1}N\texttt{1}N$. In order to avoid $\texttt{11}$ and $\texttt{2}YY$, $w$ must contain $\texttt{01}N\texttt{1}N$.
 If $N=P\texttt{0}$, then $w$ contains $\texttt{01}P\texttt{01}P\texttt{0}$. So $w$ contains the large square $\texttt{01}P\texttt{01}P$ and this case is covered by the previous item.
 So $N=P\texttt{2}$ and $w$ contains $\texttt{01}P\texttt{21}P\texttt{2}$. 
 Then there is no possible prefix letter for $P$: \texttt{0} gives \texttt{010}, \texttt{1} gives \texttt{11}, and \texttt{2} gives \texttt{212}.
 \item Case $M=\texttt{2}N$. Then $w$ contains $MM=2N2N$.
 If $N=P\texttt{1}$, then $w$ contains $\texttt{2}P\texttt{12}P\texttt{1}$. This factor cannot extend to $\texttt{2}P\texttt{12}P\texttt{12}$, since this is $\texttt{2}YY$ with $Y=P\texttt{12}$.
 So $w$ contains $\texttt{2}P\texttt{12}P\texttt{10}$. Then there is no possible suffix letter for $P$: \texttt{0} gives \texttt{010}, \texttt{1} gives \texttt{11}, and \texttt{2} gives \texttt{212}.
 This rules out the case $N=P\texttt{1}$.
 So $N=P\texttt{0}$ and $w$ contains $\texttt{2}P\texttt{02}P\texttt{0}$.
 This factor cannot extend to $\texttt{02}P\texttt{02}P\texttt{0}$, since this contains the large square $\texttt{02}P\texttt{02}P$ and this case is covered by the first item.
 Thus $w$ contains $\texttt{12}P\texttt{02}P\texttt{0}$.
 If $P=Q\texttt{1}$, then $w$ contains $\texttt{12}Q\texttt{102}Q\texttt{10}$.
 Since $\acc{\texttt{22},\texttt{1021}}\subset F$, the factor $\texttt{102}Q$ implies that $Q=\texttt{0}R$, so that $w$ contains $\texttt{120}R\texttt{1020}R\texttt{10}$.
 Then there is no possible prefix letter for $R$: \texttt{0} gives \texttt{00}, \texttt{1} gives \texttt{1201}, and \texttt{2} gives \texttt{0202}.
 This rules out the case $P=Q\texttt{1}$.
 So $P=Q\texttt{2}$ and $w$ contains $\texttt{12}Q\texttt{202}Q\texttt{20}$. The factor $Q\texttt{202}$ implies that $Q=R1$ and $w$ contains $\texttt{12}R\texttt{1202}R\texttt{120}$.
 Since $\acc{\texttt{00},\texttt{1201}}\subset F$, $w$ contains $\texttt{12}R\texttt{1202}R\texttt{1202}$, which contains $\texttt{2}YY$ with $Y=R\texttt{1202}$.
\end{itemize}

\end{proof}


\section{Polynomial formulas}\label{sec:poly}

Let us detail the binary words avoiding the polynomial formulas in Theorems~\ref{main} and~\ref{second}.
\begin{lemma}\label{lem:poly}{\ }
\begin{itemize}
\item $\acc{g_x(b_3), g_y(b_3), g_z(b_3), g_{\overline{z}}(b_3)}$ essentially avoids $AA.ABA.ABBA$. 
\item $g_x(b_3)$ essentially avoids $AABA.ABB.BBA$.
\item Let $f$ be either $ABA.AABB$, $BBA.ABA.AABB$, or $AABA.AABB$. Then $\acc{g_x(b_3), g_t(b_3)}$ essentially avoids $f$.
\end{itemize}
\end{lemma}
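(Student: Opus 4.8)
The statement of Lemma~\ref{lem:poly} asks us to prove, for several explicit binary formulas, that a small finite set of images of $b_3$ under coding-like morphisms $g_x,g_y,g_z,g_{\overline z},g_t$ essentially avoids the formula. Since these $g$-morphisms are not defined in the excerpt, my first step is to make them explicit: each $g_\bullet$ is a uniform morphism from $\{0,1,2\}^*$ to $\{0,1\}^*$ (the subscripts suggest images tuned to the different "roles" a variable can play), and I would record their images $g_\bullet(0),g_\bullet(1),g_\bullet(2)$ together with the word $b_3$ from Section~\ref{sec:lemma}. The proof then has two halves for each formula $f$: (i) a \emph{soundness} half, showing every word in the listed set $W$ avoids $f$; and (ii) a \emph{completeness} half, showing every bi-infinite binary word avoiding $f$ is equivalent to some word in $W$.

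**Soundness half.** For (i) I would argue by contradiction: suppose $g_\bullet(b_3)$ contains an occurrence of $f$, i.e.\ a non-erasing morphism $h:\{A,B\}^*\to\{0,1\}^*$ such that each fragment of $f$ maps to a factor. Because $g_\bullet$ is uniform and $b_3$ is square-free with the strong structural constraints of Lemma~\ref{lm} (no \texttt{010}, no \texttt{212}, no $XX$ with $|X|\le 3$, no $\texttt 2YY$ with $|Y|\ge 4$), the key observation is a \emph{synchronization} property: any sufficiently long factor of $g_\bullet(b_3)$ determines, up to a bounded shift, where the $g_\bullet$-block boundaries lie. One first handles the short occurrences of $f$ (those where $|h(A)|,|h(B)|$ are below the synchronization threshold) by a finite computer check on a long enough prefix of each $g_\bullet(b_3)$; this is the "we check by computer" step analogous to the one in the proof of Lemma~\ref{lm}. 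For long occurrences, synchronization lets one pull $h$ back through $g_\bullet$: the images $h(A),h(B)$ must (up to the bounded edge effects) be $g_\bullet$-images of words $u_A,u_B$ over $\{0,1,2\}$, and an occurrence of $f$ in $g_\bullet(b_3)$ forces a correspondingly structured configuration in $b_3$ — typically a square $XX$ of controlled length, or a factor \texttt{2}$YY$, or one of the forbidden short factors in $F$. Lemma~\ref{lm} then gives the contradiction. The delicate point here is that $f$ has \emph{several fragments} joined by dots (isolated variables), so an occurrence only asserts that each fragment's image is a factor, not that the whole thing is; thus the pullback must be done fragment by fragment and the fragments reassembled, using that the fragments share variables.

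**Completeness half.** For (ii), the standard route is: let $w$ be any bi-infinite binary word avoiding $f$. Because $f$ is rather restrictive (each of these formulas contains $AA$ or is divisible by something forcing near-square-freeness in a suitable recoding), one shows $w$ has bounded "defect" and in fact, after possibly a renaming, $w$ must be an $g_\bullet$-image — concretely, one proves that $w$ admits a unique factorization over the blocks $\{g_\bullet(0),g_\bullet(1),g_\bullet(2)\}$ for one of the relevant $\bullet$, and that the sequence of block-indices, read as a ternary bi-infinite word $w'$, must avoid \texttt{010}, \texttt{212}, $XX$ with $|X|\le 3$, and \texttt 2$YY$ with $|Y|\ge 4$ (otherwise $w$ would contain an occurrence of $f$, by the same block-level correspondence used in the soundness direction). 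Lemma~\ref{lm} then forces $w'\sim b_3$, hence $w\sim g_\bullet(b_3)$, i.e.\ $w$ is equivalent to one of the listed words. For the formulas where the set $W$ has several words ($AA.ABA.ABBA$ needs four), the case analysis on which block-set $w$ decomposes over is exactly what produces the several alternatives; the different subscripts $x,y,z,\overline z$ correspond to the finitely many ways a $b_3$-coding can survive, and one checks each is genuinely $f$-free and that no further binary word escapes.

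**Main obstacle.** I expect the hard part to be the \emph{synchronization / unique-desubstitution} lemma and the accompanying bookkeeping, not Lemma~\ref{lm} itself (which we may invoke). Establishing that long factors of $g_\bullet(b_3)$ desynchronize only within a bounded window, and then carefully tracking how an occurrence of a multi-fragment formula $f$ descends to a forbidden configuration in $b_3$ — with all the edge cases where $h(A)$ or $h(B)$ straddles block boundaries — is where the real work lies, and it is naturally supported by a finite backtracking check for the bounded-size cases, mirroring the computer verification already used for Lemma~\ref{lm}.
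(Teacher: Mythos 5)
Your skeleton (desubstitute the binary word to a ternary word, then invoke Lemma~\ref{lm}) matches the paper's strategy, but two concrete ideas that actually make the argument close are missing, and one working assumption is false. First, the morphisms $g_x,g_y,g_z,g_{\overline z},g_t$ are \emph{not} uniform (e.g.\ $g_x$ sends \texttt{0}, \texttt{1}, \texttt{2} to words of lengths $5$, $4$, $1$), so the ``uniform morphism, bounded-shift synchronization'' framework you build on does not apply as stated; the paper instead uses Cassaigne's notion of a \emph{circular} morphism, together with a check that the prolongable binary words of length $100$ avoiding $f$ and a suitable set of factors are exactly the factors of length $100$ of the relevant $g(b_3)$, to conclude that any bi-infinite avoiding word is a $g$-image of a ternary word $w_3$ having the same small factors as $b_3$.

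Second, and more importantly, you do not supply the mechanism that (a) splits the avoiding words into exactly $4$ (resp.\ $2$) types, and (b) finishes the completeness step. For (a), the paper proves by backtracking an exclusive-or dichotomy on explicit short factors: a word avoiding $AA.ABA.ABBA$ contains exactly one of $\{\texttt{010},\texttt{0110}\}$ and exactly one of $\{\texttt{101},\texttt{1001}\}$ (giving the four words), and a word avoiding $ABA.AABB$-type formulas contains exactly one of $\{\texttt{0010},\texttt{00110}\}$ (giving two); this reduces the lemma to statements of the form ``one morphic word essentially avoids one formula plus a set of factors.'' Your ``finitely many ways a $b_3$-coding can survive'' gestures at this but does not produce it. For (b), your claim that \emph{any} forbidden configuration in the preimage (including \texttt{010}, \texttt{212}, and small squares) lifts to an occurrence of $f$ is more than is needed and not established; the paper only needs this for the large factors $\texttt{2}YY$ with $|Y|\ge 4$ (the small forbidden factors are already excluded by the length-$100$ factor comparison), and handles exactly that case via two auxiliary facts: every $g(p\texttt{2}YYs)$ contains an occurrence of $AABA.AABBA$ (Lemma~\ref{lem:occ}), and $AABA.AABBA$ is divisible by every formula in the lemma (Lemma~\ref{rem}). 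This common ``super-formula'' is the unifying trick; without it, or some substitute, your completeness half does not go through. Finally, the soundness half in the paper is delegated to Cassaigne's decision algorithm rather than to a hand-rolled synchronization argument, which is a legitimate shortcut your plan re-derives at considerably greater cost.
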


The words avoiding these formulas are morphic images of $b_3$ by the morphisms given below.
Let $\overline{w}$ denote the word obtained from the (finite or bi-infinite) binary word $w$ by exchanging \texttt{0} and \texttt{1}.
Obviously, if $w$ avoids a given formula, then so does $\overline{w}$.
A (bi-infinite) binary word $w$ is \emph{self-complementary} if $w\sim\overline{w}$.
The words $g_x(b_3)$, $g_y(b_3)$, and $g_t(b_3)$ are self-complementary.
Since the frequency of \texttt{0} in $g_z(b_3)$ is $\tfrac59$, $g_z(b_3)$ is not self-complementary.
Then $g_{\overline{z}}$ is obtained from $g_z$ by exchanging \texttt{0} and \texttt{1}, so that $g_{\overline{z}}(b_3)=\overline{g_z(b_3)}$.


\noindent
\begin{minipage}[b]{0.23\linewidth}
\centering
$$\begin{array}{l}
 g_x(\texttt{0})=\texttt{01110},\\
 g_x(\texttt{1})=\texttt{0110},\\
 g_x(\texttt{2})=\texttt{0}.\\ 
\end{array}$$
\end{minipage}
\begin{minipage}[b]{0.21\linewidth}
\centering
$$\begin{array}{l}
 g_y(\texttt{0})=\texttt{0111},\\
 g_y(\texttt{1})=\texttt{01},\\
 g_y(\texttt{2})=\texttt{00}.\\ 
\end{array}$$
\end{minipage}
\begin{minipage}[b]{0.21\linewidth}
\centering
$$\begin{array}{l}
 g_z(\texttt{0})=\texttt{0001},\\
 g_z(\texttt{1})=\texttt{001},\\
 g_z(\texttt{2})=\texttt{11}.\\ 
\end{array}$$
\end{minipage}
\begin{minipage}[b]{0.25\linewidth}
\centering
$$\begin{array}{l}
 g_t(\texttt{0})=\texttt{01011011010},\\
 g_t(\texttt{1})=\texttt{01011010},\\
 g_t(\texttt{2})=\texttt{010}.\\ 
\end{array}$$
\end{minipage}
\\

Let us first state interesting properties of the morphisms and the formulas in Lemma~\ref{lem:poly}.
\begin{lemma}\label{lem:occ}
For every $p,s\in\Sigma_3$, $Y\in\Sigma_3^*$ with $|Y|\ge4$, and $g\in\acc{g_x,g_y,g_z,g_{\overline{z}},g_t}$,
the word $g(p2YYs)$ contains an occurrence of $AABA.AABBA$. 
\end{lemma}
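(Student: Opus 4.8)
The plan is to prove the lemma by exhibiting, for each morphism $g\in\acc{g_x,g_y,g_z,g_{\overline z},g_t}$ and each choice of $p,s\in\Sigma_3$ and of the relevant extremal letters of $Y$, an explicit non-erasing morphism $h\colon\acc{A,B}^*\to\acc{\texttt{0},\texttt{1}}^*$ realizing an occurrence of $AABA.AABBA$, i.e.\ with both $h(AABA)=h(A)h(A)h(B)h(A)$ and $h(AABBA)=h(A)h(A)h(B)h(B)h(A)$ factors of $g(p2YYs)$. The key observation is that $g(p2YYs)=g(p)\,g(2)\,g(Y)\,g(Y)\,g(s)$, so this word contains the square $g(Y)g(Y)=g(YY)$, flanked on the left by the fixed word $g(p)g(2)$ and on the right by $g(s)$; the square is nonempty since $g$ is non-erasing, and the bound $|Y|\ge4$ will only be needed to guarantee enough room in the hardest cases.

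The basic construction is to take $h(B)=g(Y)$, so that $h(B)h(B)$ is precisely the square $g(YY)$, and $h(A)=w$ for a short binary word $w$ satisfying: (i) $w$ is a prefix of $g(Y)$; (ii) $w$ is a prefix of $g(s)$; and (iii) $g(p)g(2)$ ends with $ww$. Granting (i)--(iii): since $g(p)g(2)$ ends with $ww$, is followed by $g(Y)g(Y)$, which in turn is followed by $g(s)$ starting with $w$, the word $ww\,g(Y)g(Y)\,w=h(AABBA)$ is a factor; and reading the last $w$ instead as the prefix of the \emph{second} copy of $g(Y)$ inside $g(YY)$ shows $ww\,g(Y)\,w=h(AABA)$ is a factor. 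Both morphisms are non-erasing because $w$ and $g(Y)$ are nonempty.

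For $g\in\acc{g_x,g_y,g_t}$ this works uniformly, and the whole argument reduces to a handful of short facts about the three codewords. Indeed all codewords of $g$ begin with a common word $w$ --- namely $w=\texttt{0}$ for $g_x$ and $g_y$, and $w=\texttt{010}$ for $g_t$ --- which gives (i) and (ii) at once, and (iii) holds for every $p$: for $g_x$ and $g_t$ every codeword moreover \emph{ends} with $w$ while $g(2)=w$, so $g(p)g(2)$ ends with $ww$; for $g_y$ it is even simpler, as $g_y(2)=\texttt{00}=ww$. So for these three morphisms there is essentially nothing further to compute.

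The delicate case is $g\in\acc{g_z,g_{\overline z}}$; it suffices to treat $g_z$, since exchanging $\texttt{0}$ and $\texttt{1}$ turns $g_z$ into $g_{\overline z}$. Here $g_z(\texttt{0})=\texttt{0001}$ and $g_z(\texttt{1})=\texttt{001}$ begin with $\texttt{0}$ while $g_z(\texttt{2})=\texttt{11}$ begins with $\texttt{1}$, so the codewords have no common prefix and the uniform choice of $w$ is unavailable. The plan is to branch on whether the first letter of $Y$ is $\texttt{2}$ and on the value of $s$, and, in the branches where $g_z(Y)$ and $g_z(s)$ start with different letters, to replace $g_z(Y)$ by $g_z(Y')$ for a well-chosen cyclic conjugate $Y'$ of $Y$. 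This is possible because $g_z(p2YYs)$ contains not only the period-$|g_z(Y)|$ square $g_z(Y)g_z(Y)$ but also, since consecutive codewords overlap the block structure, a square $g_z(Y')g_z(Y')$ for every conjugate $Y'$ of $Y$ whose two codeword-images still fit inside $g_z(p2YYs)$ --- this is where $|Y|\ge4$ enters, keeping the shifted square strictly interior with room for flanking letters. One then selects $Y'$ and a short $w\in\acc{\texttt{0},\texttt{1}}$ so that the left neighbour of the shifted square (a suffix of $g_z(p)g_z(2)$ followed by a prefix of $g_z(Y')$) ends with $ww$ and its right neighbour begins with $w$: in the branches involving the letter $\texttt{2}$ one exploits that $g_z(\texttt{2})=\texttt{11}$ supplies exactly the two trailing $\texttt{1}$'s needed, while in the others a shift by a few codewords restores a $\texttt{00}$ on the left while still leaving a matching $\texttt{0}$ on the right. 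I expect this bookkeeping --- choosing the conjugate and the word $w$ in each boundary case --- to be the main obstacle: it is a finite check but must be done by hand, with the subcase where $Y$ begins with $\texttt{0}$ or $\texttt{1}$ and $s\ne\texttt{2}$ being the tightest.
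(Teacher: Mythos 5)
Your handling of $g_x$, $g_y$, and $g_t$ is correct and complete, and it is essentially the paper's argument for those morphisms: a common prefix $w$ of all codewords serves as the image of $A$, the square $g(Y)g(Y)$ serves (up to absorbing boundary copies of $w$) as $BB$, and $g(\texttt{2})$ supplies the doubled $w$ on the left. The genuine gap is the case $g_z$ (hence $g_{\overline z}$), which you correctly single out as the hard one but then only sketch, deferring "the bookkeeping" to an unexecuted hand check. Two concrete problems with the sketch. First, the claim that $g_z(p\texttt{2}YYs)$ contains a square $g_z(Y')g_z(Y')$ for every conjugate $Y'$ of $Y$ whose images "fit inside" is false: $g_z(YY)$ has length exactly twice its period, so a conjugate square is a factor only when the letters adjacent to $g_z(YY)$ happen to extend the periodicity; this is a matter of boundary letters matching, not of room, and in particular the hypothesis $|Y|\ge4$ buys you nothing here. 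Second, the case split you propose (on the first letter of $Y$ and on the value of $s$) is aimed at the wrong quantities: when $Y$ begins with $\texttt{0}$ or $\texttt{1}$, the left context $g_z(p\texttt{2})$ ends in $\texttt{11}$ while $g_z(Y)$ begins with $\texttt{0}$, so no single choice of $w$ can satisfy your conditions (i) and (iii) simultaneously, and nothing in the sketch explains why a shift "by a few codewords" would restore $\texttt{00}$ on the left.

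The missing idea is to mirror the easy argument rather than patch it: every $g_z$-codeword \emph{ends} with $\texttt{1}$. Writing $g_z(Y)=U\texttt{1}$, the word $g_z(p\texttt{2}YY)$ ends with
$\texttt{1}\cdot\texttt{11}\cdot U\texttt{1}\cdot U\texttt{1}=\texttt{1}\texttt{1}\cdot\texttt{1}U\cdot\texttt{1}U\cdot\texttt{1}$,
which is an occurrence of $AABBA$ with $A=\texttt{1}$ and $B=\texttt{1}U$, and its prefix $\texttt{111}U\texttt{1}$ is the required occurrence of $AABA$. This is the one conjugate shift that is always legitimate --- by a single letter, guaranteed because the last letter of $g_z(\texttt{2})=\texttt{11}$ coincides with the last letter of every codeword --- and it needs neither $s$ nor any case analysis. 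This is exactly what the paper does; your framework of conditions (i)--(iii) applied to a shifted square does accommodate it, but the proposal as written neither finds this instantiation nor proves that its intended case analysis closes.
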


\begin{proof}{\ }
\begin{itemize}
 \item Since \texttt{0} is a prefix and a suffix of the $g_x$-image of every letter, $g_x(p\texttt{2}YYs)=V\texttt{000}U\texttt{00}U\texttt{00}W$
 contains an occurrence of $AABA.AABBA$ with $A=\texttt{0}$ and $B=\texttt{0}U\texttt{0}$.
 \item Since \texttt{0} is a prefix of the $g_y$-image of every letter, $g_y(\texttt{2}YYs)=\texttt{000}U\texttt{0}U\texttt{0}V$ with $U,V\in\Sigma^+_3$,
 which contains an occurrence of $AABA.AABBA$ with $A=\texttt{0}$ and $B=\texttt{0}U$.
 \item Since \texttt{1} is a suffix of the $g_z$-image of every letter, $g_z(p\texttt{2}YY)=\texttt{111}U\texttt{1}U\texttt{1}$
 contains an occurrence of $AABA.AABBA$ with $A=\texttt{1}$ and $B=\texttt{1}U$.
 \item Since $g_{\overline{z}}(p\texttt{2}YY)=\overline{g_z(p\texttt{2}YY)}$, $g_{\overline{z}}(s\texttt{2}YY)$ contains an occurrence of $AABA.AABBA$.
 \item Since \texttt{010} is a prefix and a suffix of the $g_t$-image of every letter, $g_t(p\texttt{2}YYs)=V\texttt{010010010}U\texttt{010010}U\texttt{010010}W$
 contains an occurrence of $AABA.AABBA$ with $A=\texttt{010}$ and $B=\texttt{010}U\texttt{010}$.
\end{itemize}
\end{proof}

\begin{lemma}\label{rem}
$AABA.AABBA$ is divisible by every formula in Lemma~\ref{lem:poly}.
\end{lemma}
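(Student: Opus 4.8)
The plan is to verify the divisibility directly from the definition: for each of the five formulas $f'$ appearing in Lemma~\ref{lem:poly}, I would exhibit a non-erasing morphism $h$ on the variable alphabet $\{A,B\}$ such that the $h$-image of every fragment of $f'$ is a factor of one of the two fragments $AABA$ and $AABBA$ of the formula $AABA.AABBA$. Once such an $h$ is produced for each $f'$, the conclusion that $AABA.AABBA$ is divisible by $f'$ is immediate from the definition of divisibility recalled in the introduction.

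The first step is to observe that the identity morphism $h(A)=A$, $h(B)=B$ already works in every case, so the entire proof collapses to checking a short list of factor containments. I would then go through the five formulas one by one: for $AA.ABA.ABBA$, the words $AA$ and $ABA$ are factors of $AABA$ and $ABBA$ is a factor of $AABBA$; for $AABA.ABB.BBA$, the word $AABA$ is already a fragment of $AABA.AABBA$, and $ABB$ and $BBA$ are factors of $AABBA$; for $ABA.AABB$, the word $ABA$ is a factor of $AABA$ and $AABB$ is a factor of $AABBA$; for $BBA.ABA.AABB$, one only additionally notes that $BBA$ is a factor of $AABBA$; and for $AABA.AABB$, the word $AABA$ is a fragment and $AABB$ is a factor of $AABBA$.

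There is no real obstacle here: the morphism is non-erasing by construction, and each of the containments above is visible by inspection, so the only care needed is to enumerate all fragments of all five formulas without missing one. I would write the argument compactly, presenting the list of required factors rather than a fully labelled case analysis, and noting explicitly that in each case the witnessing morphism may be taken to be the identity on $\{A,B\}$.
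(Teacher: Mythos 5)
Your proposal is correct and matches the paper's (implicit) argument: the paper states Lemma~\ref{rem} without proof, treating it as an immediate verification, and your check that the identity morphism on $\{A,B\}$ sends every fragment of each of the five formulas of Lemma~\ref{lem:poly} to a factor of $AABA$ or $AABBA$ is exactly that verification. All five factor containments you list are accurate, so nothing is missing.
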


We are now ready to prove Lemma~\ref{lem:poly}.
To prove the avoidability, we have implemented Cassaigne's algorithm that decides, under mild assumptions,
whether a morphic word avoids a formula~\cite{Cassaigne1994}.
We have to explain how the long enough binary words avoiding a formula can be split into 4 or 2 distinct
incompatible types. A similar phenomenon has been described for $AABB.ABBA$~\cite{aabbc}.

First, consider any infinite binary word $w$ avoiding $AA.ABA.ABBA$.
A computer check shows by backtracking that $w$ must contain
the factor \texttt{01110001110}. In particular, $w$ contains \texttt{00}.
Thus, $w$ cannot contain both \texttt{010} and \texttt{0110}, since it would produce an occurrence of $AA.ABA.ABBA$.
Moreover, a computer check shows by backtracking that $w$ cannot avoid both \texttt{010} and \texttt{0110}.
So, $w$ must contain either \texttt{010} or \texttt{0110} (this is an exclusive or).
By symmetry, $w$ must contain either \texttt{101} or \texttt{1001}.
There are thus at most 4 possibilities for $w$, depending on which subset of $\acc{\texttt{010},\texttt{0110},\texttt{101},\texttt{1001}}$
appears among the factors of $w$, see Figure~\ref{fig4}.

\begin{figure}[htbp]
\centering
\includegraphics[width=4.8cm]{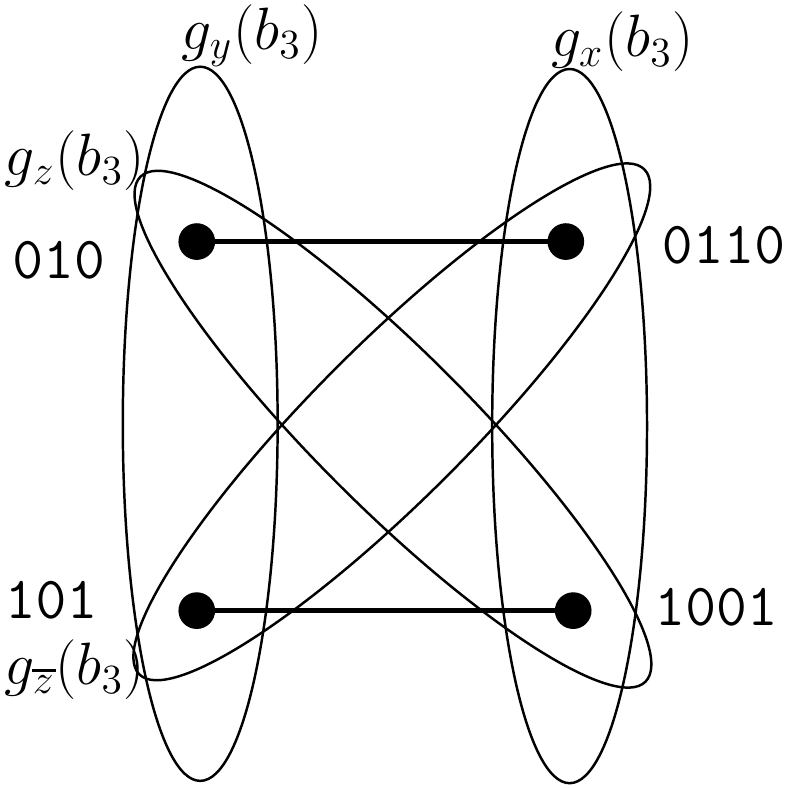}
\caption{The four infinite binary words avoiding $AA.ABA.ABBA$.}
\label{fig4}
\end{figure}


Also, consider any infinite binary word $w$ avoiding $f$, where $f$ is either $ABA.AABB$, $BBA.ABA.AABB$, or $AABA.AABB$.
Notice that the formulas $BBA.ABA.AABB$ and $AABA.AABB$ are divisible by $ABA.AABB$.
We check by backtracking that no infinite binary word avoids $f$, \texttt{0010}, and \texttt{00110}.
A word containing both \texttt{0010} and \texttt{00110} contains an occurrence of $AABA.AABBA$, and thus an occurrence of $f$ by Lemma~\ref{rem}.
So $w$ does not contain both \texttt{0010} and \texttt{00110}.
Thus, there are two possibilities for $w$ depending on whether it contains \texttt{0010} or \texttt{00110}.
%
%

Now, our tasks of the form "prove that a set of morphic words essentially avoids one formula" are reduced to (more) tasks of the form
"prove that one morphic word essentially avoids one formula and a set of factors".

Since all the proofs of such reduced tasks are very similar,
we only detail the proof that $g_y(b_3)$ essentially avoids $AA.ABA.ABBA$, \texttt{0110}, and \texttt{1001}.
We check that the set of prolongable binary words of length $100$ avoiding $AA.ABA.ABBA$, \texttt{0110}, and \texttt{1001}
is exactly the set of factors of length $100$ of $g_y(b_3)$.
Using Cassaigne's notion of circular morphism~\cite{Cassaigne1994}, this is sufficient to prove that every bi-infinite binary word
of this type is the $g_y$-image of some bi-infinite ternary word $w_3$.
It also ensures that $w_3$ and $b_3$ have the same set of small factors.
Suppose for contradiction that $w_3\not\sim b_3$.
By Lemma~\ref{lm}, $w_3$ contains a factor $\texttt{2}YY$ with $|Y|\ge4$.
Since $w_3$ is bi-infinite, $w_3$ even contains a factor $p\texttt{2}YYs$ with $p,s\in\Sigma_3$.
By Lemma~\ref{lem:occ}, $g_y(w_3)$ contains an occurrence of $AABA.AABBA$ and by Lemma~\ref{rem}, $g_y(w_3)$ contains an occurrence of $AA.ABA.ABBA$.
This contradiction shows that $w_3\sim b_3$. So $g_y(b_3)$ essentially avoids $AA.ABA.ABBA$, \texttt{0110}, and \texttt{1001}.

%

\section{Exponential formulas}\label{sec:exp}

Given a morphism $g:\Sigma^*_3\to\Sigma^*_2$, an sqf-$g$-image
is the image by $g$ of a (finite or infinite) ternary square free word.
With an abuse of language, we say that $g$ avoids a set of formulas if
every sqf-$g$-image avoids every formula in the set.
For every 2-avoidable exponential formula $f$ in Theorems~\ref{main} and~\ref{second},
we give below a uniform morphism $g$ that avoids $f$.
If possible, we simultaneously avoid the reverse formula $f^R$ of $f$.
We also avoid large squares.
Let $SQ_t$ denote the pattern corresponding to squares of period at least $t$,
that is, $SQ_1=AA$, $SQ_2=ABAB$, $SQ_3=ABCABC$, and so on.
The morphism $g$ avoids $SQ_t$ with $t$ as small as possible.
Since $\lambda(SQ_2)$, a binary word avoiding $SQ_3$ is necessarily best possible in terms of length of avoided squares.

\begin{itemize}
\item $f=AA.ABA.BABB$. This $22$-uniform morphism avoids $\acc{f,f^R,SQ_6}$:
{\small
$$
\begin{array}{c}
\texttt{0}\mapsto\texttt{0001101101110011100011}\\
\texttt{1}\mapsto\texttt{0001101101110001100011}\\
\texttt{2}\mapsto\texttt{0001101101100011100111}\\
\end{array}
$$
}
This $44$-uniform morphism avoids $\acc{f,SQ_5}$:
{\small
$$
\begin{array}{c}
\texttt{0}\mapsto\texttt{00010010011000111001001100010011100100100111}\\
\texttt{1}\mapsto\texttt{00010010011000100111001001100011100100100111}\\
\texttt{2}\mapsto\texttt{00010010011000100111001001001100011100100111}\\
\end{array}
$$
}
Notice that $\acc{f,f^R,SQ_5}$ is 2-unavoidable and $\acc{f,SQ_4}$ is 2-unavoidable.
\item $f=AA.ABB.BBAB$. This $60$-uniform morphism avoids $\acc{f,f^R,SQ_{11}}$:
{\small
$$
\begin{array}{c}
\texttt{0}\mapsto\texttt{000110011100011001110011000111000110011100011100110001110011}\\
\texttt{1}\mapsto\texttt{000110011100011001110001110011000111000110011100110001110011}\\
\texttt{2}\mapsto\texttt{000110011100011001110001100111000111001100011100110001110011}\\
\end{array}
$$
}
Notice that $\acc{f,SQ_{10}}$ is 2-unavoidable.
\item $f=AA.ABAB.BB$ is self-reverse. This $11$-uniform morphism avoids $\acc{f,SQ_4}$:
{\small
$$
\begin{array}{c}
\texttt{0}\mapsto\texttt{00100110111}\\
\texttt{1}\mapsto\texttt{00100110001}\\
\texttt{2}\mapsto\texttt{00100011011}\\
\end{array}
$$
}
Notice that $\acc{f,SQ_3}$ is 2-unavoidable.
\item $f=AA.ABBA.BAB$ is self-reverse. This $30$-uniform morphism avoids $\acc{f,SQ_6}$:
{\small
$$
\begin{array}{c}
\texttt{0}\mapsto\texttt{000110001110011000110011100111}\\
\texttt{1}\mapsto\texttt{000110001100111001100011100111}\\
\texttt{2}\mapsto\texttt{000110001100011001110011100111}\\
\end{array}
$$
}
Notice that $\acc{f,SQ_5}$ is 2-unavoidable.
\item $f=AAB.ABB.BBAA$ is self-reverse. This $30$-uniform morphism avoids $\acc{f,SQ_5}$:
{\small
$$
\begin{array}{c}
\texttt{0}\mapsto\texttt{000100101110100010110111011101}\\
\texttt{1}\mapsto\texttt{000100101101110100010111011101}\\
\texttt{2}\mapsto\texttt{000100010001011101110111010001}\\
\end{array}
$$
}
Notice that $\acc{f,SQ_4}$ is 2-unavoidable.
\item $f=AAB.ABBA.BAA$ is self-reverse. This $38$-uniform morphism avoids $\acc{f,SQ_5}$:
{\small
$$
\begin{array}{c}
\texttt{0}\mapsto\texttt{00010001000101110111010001011100011101}\\
\texttt{1}\mapsto\texttt{00010001000101110100011100010111011101}\\
\texttt{2}\mapsto\texttt{00010001000101110001110100010111011101}\\
\end{array}
$$
}
Notice that $\acc{f,SQ_4}$ is 2-unavoidable.
\item $f=AABB.ABBA$. This $193$-uniform morphism avoids $\acc{f,SQ_{16}}$:
{\small
$$
\begin{array}{l}
\texttt{0}\mapsto\texttt{00010001011011101100010110111000101101110111000101100010001011}\\
\texttt{011101100010110111011100010110111011000101101110001011011101110001}\\
\texttt{01100010001011011100010110111011100010110111011000101101110001011}\\
\texttt{1}\mapsto\texttt{00010001011011101100010110111000101101110111000101100010001011}\\
\texttt{011100010110111011100010110111011000101101110001011011101110001011}\\
\texttt{00010001011011101100010110111011100010110111011000101101110001011}\\
\texttt{2}\mapsto\texttt{00010001011011100010110111011100010110001000101101110110001011}\\
\texttt{011101110001011011101100010110111000101101110111000101100010001011}\\
\texttt{01110110001011011100010110111011100010110111011000101101110001011}\\
\end{array}
$$
}
Notice that $\acc{f,f^R}$ is 2-unavoidable and $\acc{f,SQ_{15}}$ is 2-unavoidable.
Previous papers~\cite{Ochem2004,aabbc} have considered a $102$-uniform morphism to avoid $\acc{f,SQ_{27}}$.

\item $f=ABAB.BABA$ is self-reverse. This $50$-uniform morphism avoids $\acc{f,SQ_3}$, see~\cite{Ochem2004}:
{\small
$$
\begin{array}{c}
\texttt{0}\mapsto\texttt{00011001011000111001011001110001011100101100010111}\\
\texttt{1}\mapsto\texttt{00011001011000101110010110011100010110001110010111}\\
\texttt{2}\mapsto\texttt{00011001011000101110010110001110010111000101100111}\\
\end{array}
$$
}
Notice that a binary word avoiding $\acc{f,SQ_3}$ contains only the squares
$\texttt{00}$, $\texttt{11}$, and $\texttt{0101}$ (or $\texttt{00}$, $\texttt{11}$, and $\texttt{1010}$).
\item $f=AABA.BABA$: A case analysis of the small factors shows that a recurrent binary word avoids $\acc{f,f^R,SQ_3}$
if and only if it contains only the squares $\texttt{00}$, $\texttt{11}$, and $\texttt{0101}$ (or $\texttt{00}$, $\texttt{11}$, and $\texttt{1010}$).
Thus, the previous $50$-uniform morphism that avoids $\acc{ABAB.BABA,SQ_3}$ also avoids $\acc{f,f^R,SQ_3}$.

\item $f=AAA$ is self-reverse. This $32$-uniform morphism avoids $\acc{f,SQ_4}$:
{\small
$$
\begin{array}{c}
\texttt{0}\mapsto\texttt{00101001101101001011001001101011}\\
\texttt{1}\mapsto\texttt{00101001101100101101001001101011}\\
\texttt{2}\mapsto\texttt{00100101101001001101101001011011}\\
\end{array}
$$
}
Notice that $\acc{f,SQ_3}$ is 2-unavoidable.

\item $f=ABA.BAAB.BAB$ is self-reverse. This $10$-uniform morphism avoids $\acc{f,SQ_3}$:
{\small
$$
\begin{array}{c}
\texttt{0}\mapsto\texttt{0001110101}\\
\texttt{1}\mapsto\texttt{0001011101}\\
\texttt{2}\mapsto\texttt{0001010111}\\
\end{array}
$$
}

\item $f=AABA.ABAA.BAB$ is self-reverse. This $57$-uniform morphism avoids $\acc{f,SQ_6}$:
{\small
$$
\begin{array}{c}
\texttt{0}\mapsto\texttt{000101011100010110010101100010111001011000101011100101011}\\
\texttt{1}\mapsto\texttt{000101011100010110010101100010101110010110001011100101011}\\
\texttt{2}\mapsto\texttt{000101011100010110010101100010101110010101100010111001011}\\
\end{array}
$$
}
Notice that $\acc{f,SQ_5}$ is 2-unavoidable.

\item $f=AABA.ABAA.BAAB$ is self-reverse. This $30$-uniform morphism avoids $\acc{f,SQ_3}$:
{\small
$$
\begin{array}{c}
\texttt{0}\mapsto\texttt{000101110001110101000101011101}\\
\texttt{1}\mapsto\texttt{000101110001110100010101110101}\\
\texttt{2}\mapsto\texttt{000101110001010111010100011101}\\
\end{array}
$$
}

\item $f=ABAAB$. This $10$-uniform morphism avoids $\acc{f,f^R,SQ_3}$, see~\cite{Ochem2004}:
{\small
$$
\begin{array}{c}
\texttt{0}\mapsto\texttt{0001110101}\\
\texttt{1}\mapsto\texttt{0000111101}\\
\texttt{2}\mapsto\texttt{0000101111}\\
\end{array}
$$
}

\item $f=BAB.ABA.AABB$ is self-reverse. This $16$-uniform morphism avoids $\acc{f,SQ_5}$:
{\small
$$
\begin{array}{c}
\texttt{0}\mapsto\texttt{0101110111011101}\\
\texttt{1}\mapsto\texttt{0100010111010001}\\
\texttt{2}\mapsto\texttt{0001010111010100}\\
\end{array}
$$
}
Notice that $\acc{f,SQ_4}$ is 2-unavoidable.

\item $f=AAB.ABA.ABBA$ is avoided with its reverse. This $84$-uniform morphism avoids $\acc{f,f^R,SQ_5}$:
{\small
$$
\begin{array}{l}
\texttt{0}\mapsto\texttt{000100010111000111010001000101110111010001011100011101000101110111}\\
\texttt{010001110001011101}\\
\texttt{1}\mapsto\texttt{000100010111000111010001000101110100011100010111011101000101110001}\\
\texttt{110100010111011101}\\
\texttt{2}\mapsto\texttt{000100010111000111010001000101110100011100010111010001000101110001}\\
\texttt{110100010111011101}\\
\end{array}
$$
}
Notice that $\acc{f,SQ_4}$ is 2-unavoidable.

\item $f=BAA.ABA.AABB$. This $304$-uniform morphism avoids $\acc{f,SQ_7}$:
{\small
$$
\begin{array}{l}
\texttt{0}\mapsto\texttt{000110001100111000111001100011001110011100110001100011001110011000}\\
\texttt{1110001100111001110011000110011100011100110001100011001110011000111000}\\
\texttt{1100111001110011000110001100111000111001100011001110011100110001110001}\\
\texttt{1001110011000110001100111001110011000110011100011100110001100011001110}\\
\texttt{0111001100011100011001110011}\\
\texttt{1}\mapsto\texttt{000110001100111000111001100011001110011100110001100011001110011000}\\
\texttt{1110001100111001110011000110011100011100110001100011001110011000111000}\\
\texttt{1100111001110011000110001100111000111001100011001110011100110001100011}\\
\texttt{0011100110001110001100111001110011000110011100011100110001100011001110}\\
\texttt{0111001100011100011001110011}\\
\texttt{2}\mapsto\texttt{000110001100111000111001100011001110011100110001100011001110011000}\\
\texttt{1110001100111001110011000110001100111000111001100011001110011100110001}\\
\texttt{1100011001110011000110001100111001110011000110011100011100110001100011}\\
\texttt{0011100110001110001100111001110011000110001100111000111001100011001110}\\
\texttt{0111001100011100011001110011}\\
\end{array}
$$
}
Using the morphism $g_w$ below and the technique in~\cite{BO15},
we can show that $g_w(b_3)$ essentially avoids $\acc{f,SQ_6}$:
{\small
$$
\begin{array}{l}
g_w(\texttt{0})=\texttt{011100111001110001100111001100011000110}\\
g_w(\texttt{1})=\texttt{011100111001100011000110}\\
g_w(\texttt{2})=\texttt{001110011000110}\\ 
\end{array}
$$
}
Notice that $\acc{f,f^R}$ is 2-unavoidable and $\acc{f,SQ_5}$ is 2-unavoidable.

\item $f=BBA.AABA.AABB$. This $160$-uniform morphism avoids $\acc{f,f^R,SQ_{21}}$:
{\small
$$
\begin{array}{l}
\texttt{0}\mapsto\texttt{000101100101110001011100101100010111000101100101110010110001011100}\\
\texttt{1011000101100101110010110001011100010110010111000101110010110001011001}\\
\texttt{011100101100010111001011}\\
\texttt{1}\mapsto\texttt{000101100101110001011100101100010111000101100101110010110001011100}\\
\texttt{1011000101100101110001011001011100101100010111000101110010110001011001}\\
\texttt{011100101100010111001011}\\
\texttt{2}\mapsto\texttt{000101100101110001011001011100101100010111000101100101110001011100}\\
\texttt{1011000101100101110010110001011100010111001011000101100101110001011001}\\
\texttt{011100101100010111001011}\\
\end{array}
$$
}
This $202$-uniform morphism avoids $\acc{f,SQ_5}$:
{\small
$$
\begin{array}{l}
\texttt{0}\mapsto\texttt{000110100111011010001101010001110110100110110101000111011010001101}\\
\texttt{0100011101101010001101001110110100110110101000110100111011010100011101}\\
\texttt{101000110101000111011010100011010011101101010001110110100110110101}\\
\texttt{1}\mapsto\texttt{000110100111011010001101010001110110100110110101000110100111011010}\\
\texttt{1000111011010001101010001110110101000110100111011010100011101101001101}\\
\texttt{101010001101001110110100110110101000111011010001101010001110110101}\\
\texttt{2}\mapsto\texttt{000110100111011010001101010001110110100110110101000110100111011010}\\
\texttt{1000111011010001101010001110110101000110100111011010011011010100011101}\\
\texttt{101000110101000111011010100011010011101101010001110110100110110101}\\
\end{array}
$$
}
Notice that $\acc{f,f^R,SQ_{20}}$ is 2-unavoidable and $\acc{f,SQ_4}$ is 2-unavoidable.
\end{itemize}

We start by checking that every morphism is synchronizing, that is, for every letters $a,b,c\in\Sigma_3$,
the factor $g(a)$ only appears as a prefix or a suffix in $g(bc)$.

For every $q$-morphism $g$, the sqf-$g$-images are claimed to avoid $SQ_t$ with $2t<q$.
Let us prove that $SQ_t$ is avoided.
We check exhaustively that the sqf-$g$-images contain no square $uu$ such that $t\le|u|\le2q-2$.
Now suppose for contradiction that an sqf-$g$-image contains a square $uu$ with $|u|\ge 2q-1$.
The condition $|u|\ge 2q-1$ implies that $u$ contains a factor $g(a)$ with $a\in\Sigma_3$.
This factor $g(a)$ only appears as the $g$-image of the letter $a$ because $g$ is synchronizing.
Thus the distance between any two factors $u$ in an sqf-$g$-image is a multiple of $q$.
Since $uu$ is a factor of an sqf-$g$-image, we have $q\ |\ |u|$.
Also, the center of the square $uu$ cannot lie between the $g$-images of two consecutive letters,
since otherwise there would be a square in the pre-image.
The only remaining possibility is that the ternary square free word contains a factor $aXbXc$
with $a,b,c\in\Sigma_3$ and $X\in\Sigma_3^+$ such that $g(aXbXc)=bsYpsYpe$ contains the square $uu=sYpsYp$,
where $g(X)=Y$, $g(a)=bs$, $g(b)=ps$, $g(c)=pe$.
Then, we also have $a\ne b$ and $b\ne c$ since $aXbXc$ is square free.
Then $abc$ is square free and $g(abc)=bspspe$ contains a square with period $|s|+|p|=|g(a)|=q$.
This is a contradiction since the sqf-$g$-images contain no square with period $q$.

Let us show that for every formula $f$ above and corresponding morphism $g$, $g$ avoids~$f$.
Notice that $f$ is not square free, since the only avoidable square free binary formula is $ABA.BAB$,
which is not 2-avoidable. We distinguish two kinds of formula.

A formula is \emph{easy} if every appearing variable is contained in at least one square.
Every potential occurrence of an easy formula then satisfies $|A|<t$ and $|B|<t$ since $SQ_t$ is avoided.
The longest fragment of every easy formula has length $4$.
So, to check that $g$ avoids an easy formula, it is sufficient to consider
the set of factors of the sqf-$g$-images with length at most $4(t-1)$.

A formula is \emph{tough} if one of the variables is not contained in any square.
The tough formulas have been named so that this variable is $B$.
The tough formulas are $ABA.BAAB.BAB$, $ABAAB$, $AABA.ABAA.BAAB$, and $AABA.ABAA.BAB$.
As before, every potential occurrence of a tough formula satisfies $|A|<t$ since $SQ_t$ is avoided.
Suppose for contradiction that $|B|\ge2q-1$. By previous discussion, the distance between any
two occurrences of $B$ in an sqf-$g$-image is a multiple of $q$.
The case of $ABA.BAAB.BAB$ can be settled as follows.
The factor $BAAB$ implies that $q$ divides $|BAA|$ and the factor $BAB$ implies that $q$ divides $|BA|$.
This implies that $q$ divides $|A|$, which contradicts $|A|<t$.
For the other formulas, only one fragment contains $B$ twice. This fragment is said to be \emph{important}.
Since $|A|<t$, the important fragment is a repetition which is ``almost'' a square.
The important fragment is $\boldsymbol{B}A\boldsymbol{B}$ for $AABA.ABAA.BAB$,
$\boldsymbol{B}AA\boldsymbol{B}$ for $AABA.ABAA.BAAB$, and $\boldsymbol{AB}A\boldsymbol{AB}$ for $ABAAB$.
Informally, this almost square implies a factor $aXbXc$ in the ternary pre-image, such that $|a|=|c|=1$ and $1\le|b|\le2$.
If $|X|$ is small, then $|B|$ is small and we check exhaustively that there exists no small occurrence of $f$.
If $|X|$ is large, there would exist a ternary square free factor $aYbYc$ with $|Y|$ small, such that $g(aYbYc)$
contains the important fragment of an occurrence of $f$ if and only if $g(aXbXc)$
contains the important fragment of a smaller occurrence of $f$.

%

\section{Concluding remarks}\label{sec:con}


From our results, every minimally 2-avoidable binary formula, and thus every 2-avoidable binary formula,
is avoided by some morphic image of $b_3$.

What can we forbid so that there exists only polynomially many avoiding words ?
The known examples from the literature~\cite{BO15,BNT89,Thue06} are: 
\begin{itemize}
 \item one pattern and two factors:
 \begin{itemize}
  \item $b_3$ essentially avoids $AA$, $\texttt{010}$, and $\texttt{212}$.
  \item A morphic image of $b_5$ essentially avoids $AA$, $\texttt{010}$, and $\texttt{020}$.
  \item A morphic image of $b_5$ essentially avoids $AA$, $\texttt{121}$, and $\texttt{212}$.
  \item $b_2$ essentially avoids $ABABA$, $\texttt{000}$, and $\texttt{111}$.
 \end{itemize}
 \item two patterns: $b_2$ essentially avoids $ABABA$ and $AAA$.
 \item one formula over three variables: $b_4$ and two words obtained from $b_4$ by letter permutation essentially avoid $AB.AC.BA.BC.CA$.
\end{itemize}
Now we can extend this list:
\begin{itemize}
 \item one formula over two variables:
 \begin{itemize}
  \item $g_x(b_3)$ essentially avoids $AAB.BAA.BBAB$.
  \item $\acc{g_x(b_3), g_t(b_3)}$ essentially avoids $ABA.AABB$ (or $BBA.ABA.AABB$, or $AABA.AABB$).
  \item $\acc{g_x(b_3), g_y(b_3), g_z(b_3), g_{\overline{z}}(b_3)}$ essentially avoids $AA.ABA.ABBA$.
 \end{itemize}
 \item one pattern over three variables: $ABACAABB$ (same as $ABA.AABB$) or $AABACAABB$ (same as $AABA.AABB$).
\end{itemize}



\begin{thebibliography}{10}

\bibitem{BO15}
G.~Badkobeh and P.~Ochem.
\newblock Characterization of some binary words with few squares.
\newblock {\em Theoret. Comput. Sci.}, 588:73--80, 2015.

\bibitem{BNT89}
K.~A. Baker, G.~F. McNulty, and W.~Taylor.
\newblock Growth problems for avoidable words.
\newblock {\em Theoretical Computer Science}, 69(3):319 -- 345, 1989.

\bibitem{BEM79}
D.~R. Bean, A.~Ehrenfeucht, and G.~F. McNulty.
\newblock Avoidable patterns in strings of symbols.
\newblock {\em Pacific J. of Math.}, 85:261--294, 1979.

\bibitem{Cassaigne1994}
J.~Cassaigne.
\newblock {\em {Motifs \'evitables et r\'egularit\'e dans les mots.}}
\newblock PhD thesis, {Universit\'e Paris VI}, 1994.
\newblock URL: \url{http://www.lirmm.fr/~ochem/morphisms/clark_thesis.pdf}.

\bibitem{Clark}
R.~J. Clark.
\newblock {\em Avoidable formulas in combinatorics on words}.
\newblock PhD thesis, University of California, Los Angeles, 2001.

\bibitem{Lothaire2002}
M.~Lothaire.
\newblock {\em Algebraic Combinatorics on Words}.
\newblock Cambridge Univ. Press, 2002.

\bibitem{Ochem2004}
P.~Ochem.
\newblock A generator of morphisms for infinite words.
\newblock {\em RAIRO - Theoret. Informatics Appl.}, 40:427--441, 2006.

\bibitem{aabbc}
P.~Ochem.
\newblock {Binary words avoiding the pattern AABBCABBA.}
\newblock {\em RAIRO - Theoret. Informatics Appl.}, 44(1):151--158, 2010.

\bibitem{O16}
P.~Ochem.
\newblock Doubled patterns are 3-avoidable.
\newblock {\em Electron. J. Combinatorics.}, 23(1), 2016.

\bibitem{Thue06}
A.~Thue.
\newblock {\"{U}ber unendliche {Z}eichenreihen}.
\newblock {\em Norske Vid. Selsk. Skr. I. Mat. Nat. Kl. Christiania}, 7:1--22,
  1906.

\bibitem{Zimin}
A.~I. Zimin.
\newblock Blocking sets of terms.
\newblock {\em Math. USSR Sbornik}, 47(2):353--364, 1984.

\end{thebibliography}

\end{document}